\providecommand{\U}[1]{\protect\rule{.1in}{.1in}}
\newtheorem{theorem}{Theorem}
\newtheorem{acknowledgement}[theorem]{Acknowledgement}
\newtheorem{conclusion}[theorem]{Conclusion}
\newtheorem{definition}[theorem]{Definition}
\newtheorem{proposition}[theorem]{Proposition}
\newtheorem{remark}[theorem]{Remark}
\newenvironment{proof}[1][Proof]{\noindent\textbf{#1.} }{\ \rule{0.5em}{0.5em}}
\begin{document}

\title{Category theoretic properties of the A. R\'{e}nyi and C. Tsallis entropies.}
\author{Gy\"{o}rgy Steinbrecher\\Physics Department, University of Craiova, A. I. Cuza 13, 200585\\Craiova, Romania. Email: gyorgy.steinbrecher@gmail.com
\and Alberto Sonnino\\Karlsruhe Institute of Technologies (KIT)\\Department of Electrical Engineering and Information Technologies\\D-76131 Karlsruhe, Germany \&\\Ecole Polytechnique de Louvain (EPL)\\Universit\'{e} Catholique de Louvain (UCL)\\Rue Archimede 1 bte L6.11.01, 1348 Louvain-la-Neuve - Belgium. \\Email: alberto.sonnino@gmail.com
\and Giorgio Sonnino\\Department of Theoretical Physics and Mathematics\\Universit\'{e} Libre de Bruxelles (ULB)\\Campus Plaine CP 231, Bvd de Triomphe, 1050 Brussels, Belgium.\\\& Royal Military School (RMS)\\Av. de la Renaissance 30 1000 Brussels - Belgium\\Email: gsonnino@ulb.ac.be}
\maketitle

\begin{abstract}
The problem of embedding the Tsallis and R\'{e}nyi entropies in the framework
of category theory and their axiomatic foundation is studied. To this end, we
construct a special category MES related to measured spaces. We prove that
both of the R\'{e}nyi and Tsallis entropies can be imbedded in the formalism
of category theory by proving that the same basic functional that appears in
their definitions, as well as in the associated Lebesgue space norms, has good
algebraic compatibility properties. We prove that this functional is both
additive and multiplicative with respect to the direct product and the
disjoint sum (the coproduct) in the category MES, so it is a natural candidate
for the measure of information or uncertainty. We prove that the category MES
can be extended to monoidal category, both with respect to the direct product
as well as to the coproduct. The basic axioms of the original R\'{e}nyi
entropy theory are generalized and reformulated in the framework of category
MES and we prove that these axioms foresee the existence of an universal
exponent having the same values for all the objects of the category MES. In
addition, this universal exponent is the parameter, which appears in the
definition of the Tsallis and R\'{e}nyi entropies.

\end{abstract}

\section{Introduction}

The discovery of two related generalizations of the classical Shannon entropy
\cite{shannon} is a remarkable coincidence in the history of abstract
probability theory and statistical physics. A. R\'{e}nyi introduced a possible
generalization \cite{Renyi1} of the classical Shannon entropy by pure
axiomatic extension of the Fadeev axioms \cite{FadeevAxioms}, \cite{Feinstein}
that define uniquely the Shannon entropy. On the other hand, the generalized
entropy \cite{Tsallis1}, \cite{Tsallis2} introduced by C. Tsallis was useful
to extend the classical maximum entropy principle such that the heavy tailed
distributions observed in a large scale of physical processes
\cite{TsallisBook}, \cite{TsallisGelMann}, \cite{SGBW}, \cite{SGW}, could be
derived from (generalized) maximum entropy principles. The interest in the
study of the generalizations of the Shannon entropy in the recent years is due
to the multiple applications of the Tsallis and R\'{e}nyi entropy or the
associated R\'{e}nyi divergence \cite{TsallisBook}, \cite{TsallisGelMann},
\cite{maszczyk}, \cite{RenyiDivergence}. We mention also that similar to the
classical \textit{H theorem of L. Boltzmann}, the generalizations of the
R\'{e}nyi entropy, as well as the original R\'{e}nyi entropy, is a Liapunov
functional for a large class of stochastic processes described by generalized
Fokker-Planck equations, more exactly by Fokker-Planck equation where the
drift term and the diffusion tensor are itself dependent on some external
random variable \cite{SonninoSteinbrGRE}. We mention that in the case of
suitable singular limiting procedure, both the Tsallis and R\'{e}nyi entropies
give the same limit: the Shannon entropy. The R\'{e}nyi entropy is additive
while the Tsallis entropy is not. Despite the R\'{e}nyi and Tsallis entropies
give the same results in the case of problems associated to the determination
of the probability density function from the Maximum Entropy principles,
because they are algebraically related by simple formulae, the non-additivity
of the Tsallis entropy generated many discussions in the physical literature.
On the other hand, by formulating the basic axioms \cite{Renyi1}, A. R\'{e}nyi
introduced new concepts (\textit{incomplete random variables and incomplete
distributions}) that are not included in the standard terminology of the
probability theory. Also the formulation of the \textit{Postulate} 5'
\cite{Renyi1}, is not the simplest, mathematically natural.

\noindent In this work we develop a formalism in the framework of the category
theory \cite{Categories} and \cite{MacLaneCategory} for the study of
generalized entropies. The category theory is the branch of mathematics that
plays a central role in the logical foundation and synthesis of the whole
contemporary mathematics. In particular, the category theory allows avoiding
the paradoxes of the classical set theory. In order to highlight the natural
structures related to generalized entropies, we use the central concepts of
the modern mathematics.

\noindent The paper is organized as follows. In the Section
\ref{markerSectionFunctorialProperties},
Subsection~\ref{markerSubsectDefinitions}, we define a special category
related to measurable spaces (referred to as $MES$), enabling the introduction
an \textit{associated basic functional} $Z_{p}$ (see the forthcoming Section
for his exact definition). Both the Tsallis and R\'{e}nyi entropies, as well
as the distance in $L^{p}$ spaces, may be expressed in terms of this
functional. In the Subsection~\ref{markerSubsectDirectProduct}, we define the
direct product of the objects in $MES$ and we prove that the functional
$Z_{p}$ satisfies a \textit{compatibility relation} with respect to this
product i.e., it is \textit{multiplicative}. This multiplicative property is
equivalent to the additivity of the R\'{e}nyi entropy. In the
Subsection~\ref{markerSubsectCoproduct} we define the disjoint sum (or the
coproduct) of the objects in $MES$, and we prove that the functional $Z_{p}$
satisfies a \textit{compatibility relation} with respect to coproduct i.e., it
is \textit{additive}. Note that this property is equivalent to one of the
postulates characterizing the R\'{e}nyi entropy. The prove that both product
and coproduct possess a universal property and that the direct product and
coproduct can also be defined for morphisms of the category $MES$, can be
found in the Subsection~\ref{markerSubsectUniversalProperties}. In the
Subsection~\ref{markerSubsectMonoidalCategory} we show that, by extending the
category $MES$ with the introduction of the \textit{unit object} and the null
\textit{object}, the category $MES$ \ became to a \textit{monoidal category}.

\noindent Section~\ref{markerSectionAxioms} deals with the axiomatic
characterization of the functional $Z_{p}$. We demonstrate that there exists a
universal exponent $p$ (the same for all the objects of the category) that
characterizes completely the functional $Z_{p}$ (hence, also the Tsallis or
R\'{e}nyi entropies) up to an arbitrary multiplicative factor.

\noindent Appendix~\ref{markerAppendxRenyDivergenceEntropy} shows that the
R\'{e}nyi divergence can be expressed in terms of the R\'{e}nyi entropy. The
proof of the universality (with respect to all the objects of the category
$MES$) of the exponent defining the R\'{e}nyi or Tsallis entropies can be
found in Appendix~\ref{markerAppendixSubsectFunctionalEquaton}.

\section{The category-theoretic properties related to R\'{e}nyi and Tsallis
entropies. \label{markerSectionFunctorialProperties}}

\subsection{Definitions \label{markerSubsectDefinitions}}

Our definitions include as a particular case the original definition of the
generalized entropies \cite{Tsallis1}, \cite{Tsallis2} and \cite{Renyi1}. Our
basic construction that will play the role of the \emph{object of the category
MES} is derived from the well known concept of measurable space \cite{Rudin},
\cite{ReedSimon}. Guided by statistical ideas, in order to take into account
the \textit{negligible sets} we specify also an sub-ideal of the $\sigma
$-algebra of measurable sets. \emph{The objects of the category }$MES$ consist
of triplets $M_{X}:=(X,\mathcal{A}_{X},\mathcal{N}_{X})$ with $X$ denoting the
phase space (for instance, it is a symplectic manifold in the case of
statistical physics or, in the case of elementary probability models, finite
or denumerable set) and $\mathcal{A}_{X}$ is the $\sigma-$algebra generated by
a family of subsets of $X$, respectively. We also denote with $\mathcal{N}%
_{X}\subset\mathcal{A}_{X}$ an ideal of the $\sigma$-algebra $\mathcal{A}_{X}$
having the meaning of \textit{negligible sets}. Let us now postulate the
\textit{completeness property}. From $N\in\mathcal{N}_{X}$ and $N^{\prime
}\subset N$ results $N^{\prime}\in$ $\mathcal{N}_{X}$. \emph{The morphisms of
the category }$MES$ with the source $M_{X}$ and range $M_{Y}$ are the
measurable maps $\Phi$ from $X~$\ to $Y$, which are \textit{nonsingular} i.e.,
such that $\Phi^{-1}(\mathcal{N}_{Y})\subset\mathcal{N}_{X}$. From the
completeness property results the \textit{ideal property} i.e., if
$N\in\mathcal{N}_{X}$ and $A\in\mathcal{A}_{X}$ then $A\cap N\in
\mathcal{N}_{X}$. Note that it is possible that $\mathcal{N}_{X}$ contains
only the empty set (as, for example, in the case of atomic spaces).

\begin{remark}
At first sight it would be more natural to consider the objects as
\textit{measure space triplet} $(X,\mathcal{A}_{X},\mu_{X})$ containing the
measure $\mu_{X}$, and the morphisms as the measure preserving
transformations. However, in this case we cannot define direct product or
coproduct having universal property.
\end{remark}

\noindent We denote with $C(M_{X})$, or with $C(X,\mathcal{A}_{X}
,\mathcal{N}_{X})$, the cone with all $\sigma-$ finite positive measures over
$(X,\mathcal{A}_{X},\mathcal{N}_{X})$ that are compatible with $\mathcal{N}
_{X}$ (i.e., $\mu\in C(X,\mathcal{A}_{X},\mathcal{N}_{X})$ iff for all
$N\in\mathcal{N}_{X}$ we have $\mu(N)=0$). For a given $\mu_{X}\in
C(X,\mathcal{A}_{X},\mathcal{N}_{X})$ and $p>0$, we denote with $L^{p}
(M_{X},\mu_{X})$ the Banach space ($p\geq1$) or the Fr\'{e}chet space
($0<p<1$) of functions $f_{X}:$ $X\rightarrow\mathbb{R}$ that are measurable
modulo $\mathcal{N}_{X}$ and have finite norm (pseudo norm, respectively):
more precisely, ${\int\limits_{X}}\left\vert f_{X}(x)\right\vert ^{p}d\mu
_{X}(x\mathbf{)<\infty}$. In the sequel, we shall denote
\begin{equation}
Z_{p}(M_{X},\mu_{X},{\rho}_{X}\text{ }):={\int\limits_{X}}\rho_{X}(x)^{p}
d\mu_{X}(x\mathbf{)} \label{LL0.01}%
\end{equation}
\noindent for some non-negative density $\rho_{X}\in L^{p}(M_{X},\mu_{X})$.
The generalized entropies are defined for probability density functions (PDF)
satisfying the conditions
\begin{align}
&  \rho_{X}\in L^{1}(M_{X},\mu_{X})\cap L^{p}(M_{X},\mu_{X});~\ \label{LL0.1}%
\\
&  \int\limits_{X}\rho_{X}(x)d\mu_{X}(x)=1 \label{LL0.1.1}%
\end{align}
\noindent where $\ p>0$ and $p\neq1$. The probability $P(A)$ can be
represented by PDF as follows
\begin{align}
P(A)  &  =\int\limits_{A}\rho_{X}(x)d\mu_{X}(x);\label{LL1}\\
~A  &  \subset X;~A\in\mathcal{A}_{X};~\mu_{X}\in C(M_{X})
\end{align}
\noindent In this framework, for a given measurable space $M_{X}
:=(X,\mathcal{A}_{X},\mathcal{N}_{X})$ and measure $\mu_{X}\in C(M_{X})$, the
classical Boltzmann-Gibbs-Shannon entropy functional is given by
\begin{equation}
S_{cl}[M_{X},\mu_{X},\rho_{X}]=-{\int\limits_{X}}\rho_{X}(x)\log\left[
\rho_{X}(x)\right]  d\mu_{X}(x) \label{LL2}%
\end{equation}
\noindent For a given measurable space $M_{X}$, the generalizations of the A.
R\'{e}nyi \cite{Renyi1} and C. Tsallis \cite{Tsallis1}, \cite{Tsallis2}
entropies, involves the functional $Z_{p}(M_{X},\mu_{X},{\rho}_{X})$ given by
Eq.(\ref{LL0.01}). The functional $Z_{p}$ is related to the norm of the
density $\rho$ in the Banach space for $p\geq1$ \cite{ReedSimon}, and to the
pseudo-norm $N_{p}[\rho]$ for $0<p<1$ \cite{Rudin} \cite{LuschgiPages},
through the obvious relations
\begin{align}
\left\Vert \rho_{X}\right\Vert _{p}  &  =\left[  {\int\limits_{X}}\left[
\rho_{X}(x)\right]  ^{p}d\mu_{X}(x)\right]  ^{\frac{1}{p}};~p\geq
1\label{LL3}\\
N_{p}[\rho_{X}]  &  ={\int\limits_{\Omega}}\left[  \rho_{X}(x)\right]
^{p}d\mu_{X}(x);~0<p\leq1 \label{LL4}%
\end{align}
\noindent These relations give the geometrical interpretation of the
generalized entropies (for further information Refs to
\cite{SonninoSteinbrGRE}).

\begin{remark}
The study of the generalized entropies helps us to better understand the
classical entropy. For $p\geq1$, the functional $\left\Vert \rho
_{X}\right\Vert _{p}$ is the classical $L^{p}$ norm, and for $\ 0<p<1$ the
functional $N_{p} [\rho_{X}]$ is the \textit{exotic} $L^{p}$-norm
\cite{LuschgiPages}. For $p>1$ the $L^{p}$ spaces are reflexive, the Maxent
problem is equivalent to the minimal $L^{p}$ distance problem with
restrictions \cite{SonninoSteinbrGRE}, or to the minimal $Z_{p}(M_{X},\mu
_{X},{\rho}_{X})$. For $0<p<1$, the $L^{p}$ spaces has, in general, trivial
duals, the Maxent problem is equivalent to the maximal $L^{p}$ distance or the
maximal $Z_{p}(M_{X},\mu_{X},{\rho}_{X})$ (see \cite{SonninoSteinbrGRE}). The
case $p=1$, which corresponds to the classical Shannon entropy, is just the
border point between two radically different functional-analytic properties.
\end{remark}

\noindent The corresponding generalized entropy $S_{R,p}$, proposed by A.
R\'{e}nyi \cite{Renyi1}, and the entropy, $S_{T,p}$, proposed by C. Tsallis
\cite{Tsallis1}, \cite{Tsallis2} are given by
\begin{align}
S_{R,p}[M_{X},\mu_{X},\rho_{X}]  &  =\frac{1}{1-p}\log Z_{p}(M_{X},\mu
_{X},{\rho}_{X})\label{LL5}\\
S_{T,p}[M_{X},\mu_{X},\rho_{X}]  &  =\frac{1}{1-p}\left[  1-Z_{p}(M_{X}
,\mu_{X},{\rho}_{X})\right]  \ \label{LL7}%
\end{align}
\noindent Consider now a measure space $N=(\Omega,\mathcal{A},~n)$ with
$\sigma$-finite measure $n$, and let us denote with $P(x),$ $Q(x)$ two
probability densities:
\[
{\int\limits_{\Omega}}P(x)dn(\mathbf{x})={\int\limits_{\Omega}}
Q(x)dn(\mathbf{x})=1
\]
\noindent Note that the R\'{e}nyi divergence \cite{Renyi1},
\cite{RenyiDivergence}
\begin{equation}
D_{p}(P||Q)=\frac{1}{p-1}\log{\int\limits_{\Omega}}P^{p}Q^{1-p}dn(\mathbf{x}
)\ \label{RenyiDivergence}%
\end{equation}
\noindent is related to the R\'{e}nyi entropies, Eq.(\ref{LL5}), by
Eq.(\ref{app8}) (see Appendix~\ref{markerAppendxRenyDivergenceEntropy}). Note
that when $X$ is a finite or denumerable set, if we denote with $p_{k}$ the
probabilities of element $x_{k}\in$ $X$, the measure $\mu_{X}$ is the counting
measure on the space $X$ (equal to the number of elements in a subset), and
the family of null sets $\mathcal{N} _{X}=\{\emptyset\}$ then, from the
previous Eqs.( \ref{LL5}, \ref{LL7}, \ref{LL0.01}) we get the original
definitions from Ref.\cite{Renyi1}, \cite{Tsallis1},\cite{Tsallis2}
\begin{align}
S_{R,q}[M_{X},\mu_{X},\rho_{X}]  &  =\frac{1}{1-q}\log\sum\limits_{k}p_{k}
^{q}\\
S_{T,q}[M_{X},\mu_{X},\rho_{X}]  &  =\frac{1}{1-p}\left[  1-\sum
\limits_{k}p_{k}^{q}\right] \\
Z_{q}(M_{X},\mu_{X},{\rho}_{X})  &  =\sum\limits_{k}p_{k}^{q}%
\end{align}
\noindent Remark that, in this particular case, $S_{T,q}[M_{X},\mu_{X}
,\rho_{X}]$, as well as $Z_{q}(M_{X},\mu_{X},{\rho}_{X})$, are Lesche stable
\cite{Lesche}. Note that, from Eqs~(\ref{LL2}, \ref{LL5} and \ref{LL7}),
results
\begin{equation}
\underset{p\rightarrow1}{\lim}S_{T,q}[M_{X},\mu_{X},\rho_{X}]=\underset
{p\rightarrow1}{\lim}S_{R,q}[M_{X},\mu_{X},\rho_{X}]=S_{cl}[M_{X},\mu_{X}
,\rho_{X}] \label{LL7.0}%
\end{equation}

\subsection{Direct product of measurable spaces and the multiplicative
property of $Z_{p}[M_{X},\mu_{X},\rho_{X}]$\label{markerSubsectDirectProduct}}

In the framework of the our formalism, the multiplicative property is the
counterpart of the \textit{Postulate} 4 in the R\'{e}nyi theory\cite{Renyi1}.
In the following we overload the tensor product notation " $\otimes$ "; its
meaning results from the nature of the operand. Denote the direct product of
two measurable spaces $M_{X}=(X,\mathcal{A}_{X},\mathcal{N}_{X})$ and
$M_{Y}=(Y,\mathcal{A} _{Y},\mathcal{N}_{Y})$ by $\ M_{X}\otimes M_{Y}$,
defined as follows
\begin{equation}
M_{X}\otimes M_{X}=(X\times Y,\mathcal{A}_{X}\otimes\mathcal{A}_{Y}
,\mathcal{N}_{X\otimes Y}) \label{LL7.1}%
\end{equation}
\noindent Here $X\times Y$ is the \textit{Cartesian product} of the phase
spaces $X$ and $Y$, while the $\sigma$-algebra $\mathcal{A}_{X}\otimes
\mathcal{A}_{Y}$ is the smallest $\sigma$-algebra such that it contains all of
the elements of the Cartesian product $\mathcal{A}_{X}\times\mathcal{A}_{Y}$.
The \textit{null set ideal} $\mathcal{N}_{X\otimes Y}\mathcal{\subset A}
_{X}\otimes\mathcal{A}_{Y}$ is generated by the family $(\mathcal{A}
_{X}\otimes\mathcal{N}_{Y})\cup(\mathcal{N}_{X}\otimes\mathcal{A}_{Y})$. Note
that if $\mu_{X}\in C[M_{X}]$ and $\mu_{Y}\in C[M_{Y}]$ then their direct
product satisfies the condition $\mu_{X}\otimes\mu_{Y}\in C[M_{X}\otimes
M_{Y}]$ (we denote it also by the same symbol). The measure $\mu_{X}\otimes
\mu_{Y}\ $acting on $\mathcal{(A} _{X}\otimes\mathcal{A}_{Y})/\mathcal{N}
_{X\otimes Y}$ are defined by extension by denumerable additivity, starting
from the product subsets:
\begin{align}
(\mu_{X}\otimes\mu_{Y})(A_{X}\times A_{Y})  &  =\mu_{X}(A_{X})\mu_{Y}
(A_{Y})\label{LL7.2}\\
A_{X}  &  \in\mathcal{A}_{X};A_{Y}\in\mathcal{A}_{Y}%
\end{align}
\noindent Consider now the measures $\mu_{X}\in C(M_{X})$, $\mu_{Y}\in
C(M_{Y})$, and the densities $\rho_{X}\in L^{p}(M_{X},d\mu_{X})\cap
L^{1}(M_{X},d\mu_{X})$ and $\rho_{Y}\in L^{p}(M_{Y},d\mu_{Y})\cap L^{1}
(M_{Y},d\mu_{Y})$. The following function is also denoted with the same
symbol
\begin{equation}
\ \rho_{X}\otimes\rho_{Y}\in L^{p}(M_{X}\times M_{Y},\mu_{X}\otimes\mu
_{Y})\cap L^{1}(M_{X}\times M_{Y},\mu_{X}\otimes\mu_{Y}) \label{LL7.3}%
\end{equation}
with
\begin{align}
(\rho_{X}\otimes\rho_{Y})(x,y)  &  =\rho_{X}(x)\rho_{Y}(y)\label{LL7.4}\\
x  &  \in X;~y\in Y
\end{align}
\noindent We have the following basic proposition

\begin{proposition}
\label{markPropositionMultiplicative}Let $\rho_{X}$ , $\rho_{Y}$ are
normalized PDF
\begin{equation}
{\int\limits_{X}}\rho_{X}(x)d\mu_{X}(x\mathbf{)=}{\int\limits_{Y}}\rho
_{Y}(x)d\mu_{Y}(y\mathbf{)=1;~}\rho_{X}\geq0;~\rho_{Y}\geq0
\label{normalizationPDF}%
\end{equation}
Then we have
\begin{align}
Z_{p}[M_{X}\otimes M_{Y},\mu_{X}\otimes\mu_{Y},\rho_{X}\otimes\rho_{Y}]  &
=Z_{p}[M_{X},\mu_{X},\rho_{X}]~Z_{p}[M_{Y},\mu_{Y},\rho_{Y}]\label{LL7.5}\\
S_{R,p}[M_{X}\otimes M_{Y},\mu_{X}\otimes\mu_{Y},\rho_{X}\otimes\rho_{Y}]  &
=S_{R,p}[M_{X},\mu_{X},\rho_{X}]+S_{R,p}[M_{Y},\mu_{Y},\text{ }\rho_{Y}]
\label{LL7.6}%
\end{align}

\end{proposition}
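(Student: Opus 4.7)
The plan is to reduce the identity \eqref{LL7.5} to a routine application of Tonelli's theorem, and then derive \eqref{LL7.6} as a formal consequence.

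First I would unfold the definition. By construction, the integrand $(\rho_X \otimes \rho_Y)(x,y)^p = \rho_X(x)^p\,\rho_Y(y)^p$ is a non-negative measurable function on $(X\times Y,\mathcal{A}_X\otimes \mathcal{A}_Y)$, well-defined modulo $\mathcal{N}_{X\otimes Y}$: altering $\rho_X$ on a set in $\mathcal{N}_X$ perturbs the product on a set of the form $N\times Y\in \mathcal{N}_X\otimes \mathcal{A}_Y\subset \mathcal{N}_{X\otimes Y}$, and symmetrically for $\rho_Y$. Thus the substitution into \eqref{LL0.01} is unambiguous, and we obtain
\begin{equation*}
Z_p[M_X\otimes M_Y,\mu_X\otimes\mu_Y,\rho_X\otimes\rho_Y]
= \int_{X\times Y} \rho_X(x)^p\,\rho_Y(y)^p\, d(\mu_X\otimes\mu_Y)(x,y).
\end{equation*}

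Next, since the integrand is non-negative and the product measure $\mu_X\otimes \mu_Y$ is $\sigma$-finite (as the product of $\sigma$-finite measures on the quotient $(\mathcal{A}_X\otimes\mathcal{A}_Y)/\mathcal{N}_{X\otimes Y}$), Tonelli's theorem applies and the integral factorises as
\begin{equation*}
\left(\int_X \rho_X(x)^p\, d\mu_X(x)\right)\left(\int_Y \rho_Y(y)^p\, d\mu_Y(y)\right) = Z_p[M_X,\mu_X,\rho_X]\,Z_p[M_Y,\mu_Y,\rho_Y],
\end{equation*}
which is \eqref{LL7.5}. The hypothesis $\rho_X,\rho_Y \in L^p\cap L^1$ guarantees that each factor is finite, so there is no issue of indeterminate forms; the normalization \eqref{normalizationPDF} is in fact not needed for the multiplicativity itself but is kept to ensure that $\rho_X\otimes \rho_Y$ is again a PDF, so that the Rényi formula \eqref{LL5} applies to the product.

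For \eqref{LL7.6} I would simply substitute the factorisation into the definition \eqref{LL5}: applying $\tfrac{1}{1-p}\log(\cdot)$ to both sides of \eqref{LL7.5} converts the product into a sum by $\log(ab)=\log a+\log b$, giving the additivity of $S_{R,p}$. There is no genuine obstacle here; the only point that requires a moment's care is the bookkeeping with the null ideals to confirm that the tensor density $\rho_X\otimes \rho_Y$ is a legitimate element of $L^p(M_X\otimes M_Y,\mu_X\otimes\mu_Y)$ compatible with $\mathcal{N}_{X\otimes Y}$, which follows directly from the way $\mathcal{N}_{X\otimes Y}$ was constructed as the ideal generated by $(\mathcal{A}_X\otimes \mathcal{N}_Y)\cup(\mathcal{N}_X\otimes \mathcal{A}_Y)$.
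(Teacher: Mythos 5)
Your proof is correct and takes essentially the same approach as the paper, which offers no details beyond asserting that the result "follows directly from the definitions of the direct product, the R\'{e}nyi entropy and the functional $Z_{p}$"; your Tonelli-based factorisation of the integral over $\mu_{X}\otimes\mu_{Y}$ followed by applying $\tfrac{1}{1-p}\log(\cdot)$ is precisely the detailed version of that one-line argument. Your additional care with the null ideal $\mathcal{N}_{X\otimes Y}$ and the observation that the normalization \eqref{normalizationPDF} is needed only so that \eqref{LL5} applies to the product density are sound refinements, not deviations.
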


\noindent The validity of this statement follows directly from the definitions
of the direct product, the R\'{e}nyi entropy and the functional $Z_{p}$.

\subsection{Coproduct of measurable spaces and the additivity of the
functional $Z_{p}[M_{X},\mu_{X},\rho_{X}]$\label{markerSubsectCoproduct}}

Let us study now the property encoded in the \textit{Postulate} 5' related to
the R\'{e}nyi entropy theory (Ref.~\cite{Renyi1}), transcribed in the measure
theoretic and category language and re -expressed in the term of the
functional $Z_{p}[M_{X} ,\mu_{X},\rho_{X}]$. Also in this case, we overload
the notation $\sqcup$, for the disjoint sum from the set theory. Its precise
meaning will be clear from the nature of the operands. In the following we
investigate the \textit{functorial properties}, related to \textit{Postulate}
5', of the functional $Z_{p}[M_{X},\mu_{X},\rho]$, in analogy to Proposition
\ref{markPropositionMultiplicative}. To this end we introduce the following

\begin{definition}
\label{markerDefinitionCoproduct}The coproduct of measurable spaces
$M_{X}=(X,\mathcal{A}_{X},\mathcal{N}_{X})$ and $M_{Y}=(Y,\mathcal{A}
_{Y},\mathcal{N}_{Y})$ will be denoted by $M_{X}\sqcup M_{Y}$ and have the
following structure
\begin{equation}
M_{X}\sqcup M_{Y}=\ (X\sqcup Y,\mathcal{A}_{X}\sqcup\mathcal{A}_{Y}
,\mathcal{N}_{X}\sqcup\mathcal{N}_{Y}) \label{LL7.7}%
\end{equation}
\noindent Here, $X\sqcup Y$ is the disjoint sum of the sets $X$ and $Y$, and
$\mathcal{A}_{X}\sqcup\mathcal{A}_{Y}$ is the smallest $\sigma$-algebra that
contains all of the sets of the form $A_{1}\sqcup A_{2}$, with $A_{1}
\in\mathcal{A}_{X}$ and $A_{2}\in\mathcal{A}_{Y}$, respectively. Moreover, the
new null set ideal $\mathcal{N}_{X}\sqcup\mathcal{N}_{Y}$ is the smallest
$\sigma$-algebra generated by the family $N_{1}\sqcup N_{2}$ with $N_{1}\in$
$\mathcal{N}_{X}$ and $N_{2}\in$ $\mathcal{N}_{Y}$. Let the measures $\mu
_{X}\in C(M_{X})$, $\mu_{Y}\in C(M_{Y})$ and the weights $w_{1}\geq0$,
$w_{2}\geq0$ and $w_{1}+w_{2}=1$. The measure $\mu:=w_{1}\mu_{X}\sqcup
w_{2}\mu_{Y}$ acts on the $\sigma~$- algebra $\mathcal{A}_{X}\sqcup
\mathcal{A}_{Y}$ and it is defined uniquely as the continuation by denumerable
additivity from the property
\begin{align}
\mu(A_{1})  &  =w_{1}\mu_{X}(A_{1});~A_{1}\ \in\mathcal{A}_{X}\label{LL7.8}\\
\mu(A_{2})  &  =w_{2}\mu_{Y}(A_{2});~A_{2}\in\mathcal{A}_{Y} \label{LL7.9}%
\end{align}
\noindent Let $\rho_{X}\in L^{p}(M_{X},d\mu_{X})\cap L^{1}(M_{X},d\mu_{X})$
and $\rho_{Y}\in L^{p}(M_{Y},d\mu_{Y})\cap L^{1}(M_{Y},d\mu_{Y})$\ . We define
the function $\rho:=\rho_{X}\sqcup\rho_{Y}\in L^{p}(M_{X}\sqcup M_{Y},w_{1}
\mu_{X}\sqcup w_{2}\mu_{Y})\cap L^{1}(M_{X}\sqcup M_{Y},w_{1}\mu_{X}\sqcup
w_{2}\mu_{Y})$ as follows
\begin{align*}
\rho(x)  &  =\rho_{X}(x);~if~x\in X\\
\rho(x)  &  =\rho_{Y}(x);~if~x\in Y
\end{align*}

\end{definition}

\noindent We restrict our definition of coproduct to finite terms. An example
of (denumerable infinite) coproduct is the grand canonical ensemble.

\begin{remark}
\noindent If $\rho_{X}d\mu_{X}$ and $\rho_{Y}d\mu_{Y}$ are probability
measures, then the measure $[\rho_{1}\sqcup\rho_{2}](x)[w_{1}d\mu_{X}\sqcup
w_{2}d\mu_{Y}]$ is a probability measure if $w_{1}+w_{2}=1$ .
\end{remark}

\noindent From the previous definition of the direct sum and the functional
$Z_{p} [M_{X},\mu_{X},\rho_{X}]$ the following obvious proposition results

\begin{proposition}
\label{markerPropositionAdditivityZp}The reformulation of the
\textit{Postulate} 5' (Ref.~\cite{Renyi1}) reads: the functional $Z_{p}
[M_{X},\mu_{X},\rho_{X}]$ is additive with respect to the direct sum of
measurable spaces
\begin{equation}
Z_{p}[M_{X}\sqcup M_{Y},w_{1}\mu_{X}\sqcup w_{2}\mu_{Y},\rho_{X}\sqcup\rho
_{Y}]=w_{1}Z_{p}[M_{X},\mu_{X},\rho_{X}]+w_{2}Z_{p}[M_{Y},\mu_{Y},\rho_{Y}]
\label{LL7.10}%
\end{equation}

\end{proposition}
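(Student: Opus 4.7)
The plan is to prove the identity by unwinding the definitions of the coproduct measure $\mu := w_{1}\mu_{X}\sqcup w_{2}\mu_{Y}$ and of the function $\rho := \rho_{X}\sqcup\rho_{Y}$ given in Definition~\ref{markerDefinitionCoproduct}, and splitting the defining integral of $Z_{p}$ along the canonical decomposition of $X\sqcup Y$ into the images of its two summands. Since the proposition is already flagged as \emph{obvious} in the text, the work amounts to making this bookkeeping precise; no genuine analytic ingredient is required.

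First I would observe that the canonical images of $X$ and $Y$ in $X\sqcup Y$ (call them $X'$ and $Y'$) form a partition of the coproduct into measurable subsets. Indeed, taking $A_{1}=X$, $A_{2}=\emptyset$ (and symmetrically) in the generating family of $\mathcal{A}_{X}\sqcup\mathcal{A}_{Y}$ shows $X',Y'\in\mathcal{A}_{X}\sqcup\mathcal{A}_{Y}$. By additivity of the Lebesgue integral along this partition,
\begin{equation*}
Z_{p}[M_{X}\sqcup M_{Y},\mu,\rho]=\int_{X'}\rho^{p}\,d\mu+\int_{Y'}\rho^{p}\,d\mu.
\end{equation*}
By the defining formula for $\rho=\rho_{X}\sqcup\rho_{Y}$, the restriction of $\rho$ to $X'$ coincides with $\rho_{X}$ under the canonical identification $X'\cong X$, and similarly on $Y'$.

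The only point needing an actual argument is that the restriction of $\mu$ to $X'$ equals $w_{1}\mu_{X}$. Equations~(\ref{LL7.8})--(\ref{LL7.9}) specify $\mu$ on the $\pi$-system of ``rectangles'' $A_{1}\subset X'$ (resp.\ $A_{2}\subset Y'$), and since both $\mu|_{X'}$ and $w_{1}\mu_{X}$ are $\sigma$-finite measures on $\mathcal{A}_{X}$ agreeing on a generating $\pi$-system, the standard Carath\'{e}odory/Dynkin uniqueness theorem forces them to agree on all of $\mathcal{A}_{X}$. This is the one invocation of external machinery; I do not expect it to be an obstacle.

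Having identified $\mu|_{X'}=w_{1}\mu_{X}$ and $\mu|_{Y'}=w_{2}\mu_{Y}$, I would pull the constants out of the two integrals and read off
\begin{equation*}
\int_{X'}\rho^{p}\,d\mu=w_{1}\int_{X}\rho_{X}(x)^{p}\,d\mu_{X}(x)=w_{1}Z_{p}[M_{X},\mu_{X},\rho_{X}],
\end{equation*}
and the analogous identity on $Y'$. Adding the two yields the claimed formula~(\ref{LL7.10}). The conceptual content of the proof is therefore nothing more than the observation that the coproduct construction is tailored precisely to respect the partition of $X\sqcup Y$ into its two summands; the hypothesis $w_{1}+w_{2}=1$ plays no role here and is only needed if one wishes the output density to remain a probability density, as remarked in the paragraph preceding the proposition.
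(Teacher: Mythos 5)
Your proof is correct and follows exactly the route the paper intends: the paper gives no explicit proof, merely asserting that the proposition results ``from the previous definition of the direct sum and the functional $Z_{p}$'', and your argument is precisely that unwinding --- splitting the integral over the measurable partition of $X\sqcup Y$ into $X'$ and $Y'$, identifying $\mu|_{X'}=w_{1}\mu_{X}$ and $\rho|_{X'}=\rho_{X}$ (the appeal to the Dynkin/Carath\'{e}odory uniqueness theorem is harmless, though one can also note that sets of the form $A_{1}\sqcup A_{2}$ already constitute a $\sigma$-algebra, so the restriction is immediate), and summing the two pieces. Your closing observation that $w_{1}+w_{2}=1$ is irrelevant to the identity itself and only matters for $\rho_{X}\sqcup\rho_{Y}$ to remain a probability density is likewise accurate.
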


\subsection{ Universal properties of the direct product and direct sum in the
category of measurable spaces \label{markerSubsectUniversalProperties}}

In the following we prove that the basic binary operations on measurable
spaces, the direct product and the direct sum, defined in the previous
section, have universality properties in the category of measurable spaces
$MES$.

\noindent\ Consider the direct product $M=M_{X}\otimes M_{Y}$ of measurable
spaces $M_{X}=(X,\mathcal{A}_{X},\ \mathcal{N}_{X})$ and $M_{Y}=(Y,\mathcal{A}
_{Y},\mathcal{N}_{Y})$. Observe that the canonical projections $p_{X}:X\times
Y\rightarrow X$, $p_{Y}:X\times Y\rightarrow Y$, are measurable and induce the
morphisms $\pi_{X}:M_{X}\otimes M_{Y}\rightarrow M_{X}$ and $\pi_{Y}
:M_{X}\otimes M_{Y}\rightarrow M_{Y}$ \ \ between the objects of $MES$. We
have the following

\begin{proposition}
\label{markerPropositionProiectionsAreMorphism} In the category $MES$ the
applications $\pi_{X}:M_{X}\otimes M_{Y}\rightarrow M_{X}$, $\pi_{Y}
:M_{X}\otimes M_{Y}\rightarrow M_{Y}$, which are naturally induced by
canonical projections $p_{X}:X\times Y\rightarrow X$ and $p_{Y}:X\times
Y\rightarrow Y$, are morphisms.

\begin{proof}
\noindent The measurability of $\pi_{X}$ is direct consequence of the fact
that the canonical projection maps are measurable, in fact the measurability
of the canonical projections is an alternative definition of the product of
$\sigma$ algebras. The nonsingularity property \ $p_{X}^{(-1)} (\mathcal{N}%
_{X})\subset\mathcal{N}_{X\times Y}$ \ results \ from $p_{X}^{(-1)}%
(\mathcal{N}_{X})=\mathcal{N}_{X}\times\mathcal{A}_{Y} \subset\mathcal{N}%
_{X\times Y}$ .
\end{proof}
\end{proposition}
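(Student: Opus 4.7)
My plan is to verify directly the two defining conditions for an arrow in $MES$, namely measurability and nonsingularity, for the map $\pi_X$ (the argument for $\pi_Y$ being entirely symmetric). Recall that an $MES$-morphism with source $M_{X\otimes Y}$ and range $M_X$ must be a map $X\times Y\to X$ that is measurable with respect to the $\sigma$-algebras $\mathcal{A}_X\otimes\mathcal{A}_Y$ and $\mathcal{A}_X$, and that additionally pulls back negligible sets into negligible sets: $p_X^{-1}(\mathcal{N}_X)\subset \mathcal{N}_{X\otimes Y}$.

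For measurability, I would simply unfold the definition given in Subsection~\ref{markerSubsectDirectProduct}. Since $\mathcal{A}_X\otimes \mathcal{A}_Y$ is by construction the smallest $\sigma$-algebra containing all rectangles $A_X\times A_Y$, it suffices, for any generator $A\in\mathcal{A}_X$, to compute $p_X^{-1}(A)=A\times Y$. Because $Y\in\mathcal{A}_Y$, this preimage is a rectangle and thus belongs to $\mathcal{A}_X\otimes\mathcal{A}_Y$. Standard generation arguments then upgrade measurability on generators to measurability on all of $\mathcal{A}_X$, yielding the first condition.

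For nonsingularity, the idea is identical. Given $N\in\mathcal{N}_X$, compute $p_X^{-1}(N)=N\times Y$. By definition $\mathcal{N}_{X\otimes Y}$ is the ideal generated by $(\mathcal{A}_X\otimes\mathcal{N}_Y)\cup (\mathcal{N}_X\otimes\mathcal{A}_Y)$, and since $N\in\mathcal{N}_X$ and $Y\in\mathcal{A}_Y$, the rectangle $N\times Y$ lies in the second piece of the generating family, hence in $\mathcal{N}_{X\otimes Y}$. This gives exactly the inclusion $p_X^{-1}(\mathcal{N}_X)\subset\mathcal{N}_{X\otimes Y}$.

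There is no substantive obstacle here: once the definitions of product $\sigma$-algebra and of the null-set ideal $\mathcal{N}_{X\otimes Y}$ have been set up as in the previous subsection, both conditions reduce to the observation $p_X^{-1}(\cdot)=(\cdot)\times Y$. The only minor point worth being careful about is the slight notational abuse between "$\mathcal{N}_X\otimes\mathcal{A}_Y$" as a family of rectangles versus as a $\sigma$-algebra; I would state it cleanly as "$N\times Y$ lies in the generating family of $\mathcal{N}_{X\otimes Y}$" to avoid confusion. The same verification word-for-word, with the roles of $X$ and $Y$ swapped, handles $\pi_Y$.
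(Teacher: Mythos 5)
Your proposal is correct and follows essentially the same route as the paper's proof: measurability via the defining property of the product $\sigma$-algebra (preimages $p_X^{-1}(A)=A\times Y$ are rectangles), and nonsingularity via the observation that $N\times Y$ lies in the generating family $\mathcal{N}_X\otimes\mathcal{A}_Y$ of the ideal $\mathcal{N}_{X\otimes Y}$. Your phrasing is in fact slightly cleaner than the paper's, which writes the abusive identity $p_X^{-1}(\mathcal{N}_X)=\mathcal{N}_X\times\mathcal{A}_Y$ where an inclusion into the generating rectangles is what is meant.
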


From the previous Proposition \ref{markerPropositionProiectionsAreMorphism}
results immediately the following Theorem

\begin{theorem}
\label{markerTheoremDirectProd} In the category $MES$, the direct product has
the universal property. Let $M_{X}=(X,\mathcal{A}_{X},\mathcal{N}_{X})$.
$M_{Y}=(Y,\mathcal{A}_{Y},\mathcal{N}_{Y})$ and $M=(Z,\mathcal{A}
_{Z},\mathcal{N}_{Z})$ measurable spaces that are objects of the category
$MES$, such that there exists morphisms $\phi_{X}\in Hom(M,M\,_{X})$ and
$\phi_{Y}\in Hom(M,M\,_{Y})$. Then there exists an unique morphism $\theta\in
Hom(M,M_{X}\otimes M_{Y})$ such that
\begin{align}
\phi_{X}  &  =\pi_{X}\circ\theta\label{LL8}\\
\phi_{Y}  &  =\pi_{Y}\circ\theta\label{LL9}%
\end{align}
\noindent where $\pi_{X}$ , $\pi_{Y}$ are the morphism defined in Proposition
\ref{markerPropositionProiectionsAreMorphism} .

\begin{proof}
\noindent The morphism $\theta$ is induced by the application $T:Z\rightarrow
X\times Y$ defined as $Z\ni z\rightarrow T(z):=(\phi_{X}(z),\phi_{Y}(z))\in
X\times Y$. and it is unique. In order to prove that $\theta$ is a morphism we
have to prove that $T$ is measurable and it is nonsingular. To prove that
$T:Z\rightarrow X\times Y$ is measurable, we recall that it is sufficient to
prove that, for all $A\in\mathcal{A}_{X}$, $B\in\mathcal{A}_{Y}$, we have the
property $T^{(-1)}(A\times B)\in\mathcal{A}_{Z}$, a property resulting from
the measurability of $\phi_{X}$ and $\phi_{Y}$. Note that to prove the
inclusion $T^{-1}(\mathcal{N}_{X\times Y})\subset\mathcal{N}_{Z}$, it is
sufficient to demonstrate for the generating subsets $T^{-1}(\mathcal{N}%
_{X}\times\mathcal{A}_{Y})\subset\mathcal{N}_{Z}$ (which follows from the
nonsingularity of $\phi_{X}$ and $T^{-1}(\mathcal{A} _{X}\times\mathcal{N}%
_{Y})\subset\mathcal{N}_{Z}$) that this is the consequence of the
nonsingularity of $\phi_{Y}$.
\end{proof}
\end{theorem}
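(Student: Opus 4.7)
The plan is to construct the candidate morphism $\theta$ explicitly, verify it is well-defined as a morphism in $MES$, and then conclude uniqueness. The guiding principle is that everything is forced at the underlying set level by the two projection equations, so the only real work is checking the two defining properties of morphisms in $MES$: measurability and nonsingularity with respect to the null-set ideals.

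First I would set $T\colon Z\to X\times Y$ by $T(z):=(\phi_{X}(z),\phi_{Y}(z))$. This is forced because any set-theoretic map $\widetilde T$ satisfying $p_{X}\circ\widetilde T=\phi_{X}$ and $p_{Y}\circ\widetilde T=\phi_{Y}$ must agree coordinatewise with $T$; so uniqueness of $\theta$ at the level of underlying maps is automatic, and it only remains to show that this particular $T$ descends to a morphism $\theta\in\mathrm{Hom}(M,M_{X}\otimes M_{Y})$ in $MES$.

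Second, I would verify measurability. Since $\mathcal{A}_{X}\otimes\mathcal{A}_{Y}$ is by definition the smallest $\sigma$-algebra containing the measurable rectangles, it suffices to check $T^{-1}(A\times B)\in\mathcal{A}_{Z}$ for $A\in\mathcal{A}_{X}$, $B\in\mathcal{A}_{Y}$; and indeed
\begin{equation*}
T^{-1}(A\times B)=\phi_{X}^{-1}(A)\cap\phi_{Y}^{-1}(B)\in\mathcal{A}_{Z},
\end{equation*}
using that $\phi_{X}$ and $\phi_{Y}$ are measurable and that $\mathcal{A}_{Z}$ is closed under finite intersections. Standard $\pi$-$\lambda$ (or monotone-class) considerations then extend measurability of the preimage to the full $\sigma$-algebra $\mathcal{A}_{X}\otimes\mathcal{A}_{Y}$.

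Third, I would verify nonsingularity, which is the step most likely to need care. The ideal $\mathcal{N}_{X\otimes Y}$ was defined as the ideal generated by $(\mathcal{A}_{X}\otimes\mathcal{N}_{Y})\cup(\mathcal{N}_{X}\otimes\mathcal{A}_{Y})$, so the completeness plus ideal property from the Definitions subsection guarantees that every element of $\mathcal{N}_{X\otimes Y}$ is contained in a countable union of rectangles of the form $A\times N$ with $N\in\mathcal{N}_{Y}$ or $N\times B$ with $N\in\mathcal{N}_{X}$. For such a generating rectangle,
\begin{equation*}
T^{-1}(A\times N)=\phi_{X}^{-1}(A)\cap\phi_{Y}^{-1}(N)\subset\phi_{Y}^{-1}(N)\in\mathcal{N}_{Z},
\end{equation*}
by nonsingularity of $\phi_{Y}$, and the ideal property then gives $T^{-1}(A\times N)\in\mathcal{N}_{Z}$; the symmetric argument handles the other generating family. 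Taking countable unions and using that $\mathcal{N}_{Z}$ is an ideal closed under subsets (by completeness), one concludes $T^{-1}(\mathcal{N}_{X\otimes Y})\subset\mathcal{N}_{Z}$. The main obstacle is simply bookkeeping here: one must justify that checking the nonsingularity condition on the generating family suffices, which is where the completeness axiom on $\mathcal{N}_{Z}$ and the definition of $\mathcal{N}_{X\otimes Y}$ as an ideal (rather than merely a $\sigma$-algebra) are essential. Once these three steps are in place, setting $\theta$ to be the morphism induced by $T$ and reading off $\pi_{X}\circ\theta=\phi_{X}$, $\pi_{Y}\circ\theta=\phi_{Y}$ from the construction completes the proof.
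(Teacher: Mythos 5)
Your proposal is correct and follows essentially the same route as the paper's proof: define $T(z)=(\phi_X(z),\phi_Y(z))$, check measurability on the generating rectangles, and check nonsingularity on the two generating families $\mathcal{N}_X\times\mathcal{A}_Y$ and $\mathcal{A}_X\times\mathcal{N}_Y$. If anything, you are more careful than the paper, which asserts without elaboration that checking the generators suffices, whereas you spell out the explicit identity $T^{-1}(A\times B)=\phi_X^{-1}(A)\cap\phi_Y^{-1}(B)$ and justify the reduction via countable unions and the completeness (subset-closure) of $\mathcal{N}_Z$.
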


\noindent In conclusion the direct product operation has the natural
functorial property, so the multiplicative property Eq.~(\ref{LL7.5}) of the
functional $Z_{p}(M_{X},\mu_{X},\rho_{X})$ appears as an algebraic
compatibility property. By simple reversal of the arrows, we are lead to the
corresponding universality property of the coproduct in the category $MES$. We
have the following obvious proposition

\begin{proposition}
\label{markerPropositionCanonicalInjection} In the category $MES$, consider
the objects $M_{X},M_{Y}$. The applications $\iota_{X}:M_{X}\rightarrow
M_{X}\sqcup M_{Y}$ and $\iota_{Y}:M_{Y}\rightarrow M_{X}\sqcup M_{Y}$, induced
naturally by the canonical injections $i_{X}:X\rightarrow X\sqcup Y$,
$i_{Y}:Y\rightarrow X\sqcup Y$, are morphism in the category $MES$.
\end{proposition}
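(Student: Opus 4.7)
The statement is the dual of Proposition~\ref{markerPropositionProiectionsAreMorphism}: where in the product case one verifies that the projections from $M_X \otimes M_Y$ are morphisms in $MES$, here one must verify that the injections into $M_X \sqcup M_Y$ are morphisms. Recalling the definition of a morphism in $MES$ given in Subsection~\ref{markerSubsectDefinitions}, two things must be checked for each of $i_X$ and $i_Y$: measurability with respect to the $\sigma$-algebras, and nonsingularity with respect to the null-set ideals. By symmetry, it suffices to carry out the verification for $i_X : X \to X \sqcup Y$; the argument for $i_Y$ is identical with the roles of $X$ and $Y$ interchanged.

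For measurability, I would use the fact that $\mathcal{A}_X \sqcup \mathcal{A}_Y$ is, by Definition~\ref{markerDefinitionCoproduct}, the smallest $\sigma$-algebra containing the generating family $\{A_1 \sqcup A_2 : A_1 \in \mathcal{A}_X,\; A_2 \in \mathcal{A}_Y\}$. On this generating family one has $i_X^{-1}(A_1 \sqcup A_2) = A_1 \in \mathcal{A}_X$. The standard closure argument then applies: the collection $\mathcal{S} := \{ B \subset X \sqcup Y : i_X^{-1}(B) \in \mathcal{A}_X \}$ is itself a $\sigma$-algebra (because $i_X^{-1}$ commutes with countable unions, intersections, and complements within $X$), contains all the generators, and therefore contains all of $\mathcal{A}_X \sqcup \mathcal{A}_Y$. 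This establishes measurability of $i_X$.

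For nonsingularity, I would repeat the same generator-based argument using the null-set ideal instead. By Definition~\ref{markerDefinitionCoproduct}, $\mathcal{N}_X \sqcup \mathcal{N}_Y$ is generated by sets of the form $N_1 \sqcup N_2$ with $N_1 \in \mathcal{N}_X$ and $N_2 \in \mathcal{N}_Y$, and on such a generator $i_X^{-1}(N_1 \sqcup N_2) = N_1 \in \mathcal{N}_X$. The collection of subsets $B$ of $X \sqcup Y$ with $i_X^{-1}(B) \in \mathcal{N}_X$ is closed under countable unions and under passage to measurable subsets (using the completeness/ideal property of $\mathcal{N}_X$ postulated in Subsection~\ref{markerSubsectDefinitions}), so it contains the $\sigma$-ideal generated by those elementary sets. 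Hence $i_X^{-1}(\mathcal{N}_X \sqcup \mathcal{N}_Y) \subset \mathcal{N}_X$, which is exactly the nonsingularity requirement.

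Combining these two verifications proves that $\iota_X$ is a morphism in $MES$; the argument for $\iota_Y$ is symmetric. I do not expect any serious obstacle: the proposition is deliberately labelled ``obvious'' in the text, and the only mild subtlety is to remember that negligible sets are defined via an \emph{ideal} rather than a $\sigma$-algebra, so the closure argument for nonsingularity must invoke the completeness property stated in Subsection~\ref{markerSubsectDefinitions} rather than complementation. Once this is kept in mind the proof is entirely formal and exactly dual to that of Proposition~\ref{markerPropositionProiectionsAreMorphism}.
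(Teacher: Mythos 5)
Your proposal is correct and follows essentially the same route as the paper: the paper's proof likewise takes measurability of the canonical injections as standard and verifies nonsingularity on the generating sets via $i_{X}^{(-1)}(N_{1}\sqcup N_{2})=N_{1}\in\mathcal{N}_{X}$. You merely spell out the generator-to-$\sigma$-algebra (and generator-to-ideal) closure arguments that the paper leaves implicit, which is a matter of detail rather than of method.
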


\begin{proof}
The injections $i_{X}$, $i_{Y}$ are measurable. Suppose that $N_{1}\sqcup
N_{2}\in\mathcal{N}_{X}\sqcup\mathcal{N}_{Y}$, with $N_{1}\in\mathcal{N}_{X}
$, $N_{2}\in\mathcal{N}_{Y}$ (see \textit{Definition}%
~\ref{markerDefinitionCoproduct}). Then, $i_{X}^{(-1)}(N_{1}\sqcup
N_{2})=N_{1}$, $i_{Y}^{(-1)}(N_{1}\sqcup N_{2})=N_{2}$, so $i_{X}$ and $i_{Y}$
are nonsingular, which completes the proof that $\iota_{X}$ , $\iota_{Y}$ are
morphisms in the category $MES$.
\end{proof}

\noindent By reversing the arrows, in analogy to the Theorem
\ref{markerTheoremDirectProd}, we obtain the following result

\begin{theorem}
\label{markerTheoremCanonicalInjection} In the category $MES$ the direct sum
of the objects has the following universality property. Let denote with
$M_{X}=(X,\mathcal{A}_{X},\mathcal{N}_{X})$, $M_{Y}=(Y,\mathcal{A}
_{Y},\mathcal{N}_{Y})$ and $M=(Z,\mathcal{A}_{Z},\mathcal{N}_{Z})$ measurable
spaces that are objects of the category $MES$, such that there exists
morphisms $\phi_{X}\in Hom(M_{X},M)$ and $\phi_{Y}\in Hom(M_{Y},M)$. Then,
there exists an unique morphism $\gamma\in Hom(M_{X}\sqcup M_{Y},M)$ such
that
\begin{align*}
\gamma\circ\iota_{X}  &  =\phi_{X}\\
\gamma\circ\iota_{Y}  &  =\phi_{Y}%
\end{align*}
\noindent where $\iota_{X}$ ,$\iota_{X}$ are the morphisms defined in
Proposition~\ref{markerPropositionCanonicalInjection}.

\begin{proof}
\noindent The morphism $\gamma$ is induced by the map $g:X\sqcup Y\rightarrow
Z$ defined as follows. If $x\in X$ then $g(x):=\phi_{X}(x)\in Z$, and in the
case $x\in Y$, then $g(x):=\phi_{Y}(x)\in Z$. The measurability of the map $g$
results from the measurability of $\phi_{X}$ and $\phi_{Y}$. The inclusion
$g^{(-1)}(\mathcal{N}_{Z})\subset\mathcal{N}_{X}\sqcup\mathcal{N}_{Y}$ results
from the nonsingularity of $\phi_{X}$ and $\phi_{Y}$.
\end{proof}
\end{theorem}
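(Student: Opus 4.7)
The plan is to mimic the construction used for Theorem~\ref{markerTheoremDirectProd}, but with arrows reversed, leveraging the universal property of disjoint union in the category of sets together with the specific construction of $\mathcal{A}_X \sqcup \mathcal{A}_Y$ and $\mathcal{N}_X \sqcup \mathcal{N}_Y$ given in Definition~\ref{markerDefinitionCoproduct}.

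First, I would define the underlying set map $g:X\sqcup Y\rightarrow Z$ by the universal property of disjoint union in Set: $g(x):=\phi_X(x)$ for $x\in X$ and $g(x):=\phi_Y(x)$ for $x\in Y$. This is forced by the required compatibility with $\iota_X,\iota_Y$, which already secures the uniqueness part of the statement: any other morphism $\gamma'$ with $\gamma'\circ\iota_X=\phi_X$ and $\gamma'\circ\iota_Y=\phi_Y$ must agree with $g$ on the images of $\iota_X$ and $\iota_Y$, and these images exhaust $X\sqcup Y$.

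Next, I would verify that $g$ is a morphism in $MES$, which is the technical heart of the argument. For measurability, it suffices to check preimages of a generating family of $\mathcal{A}_Z$. For any $A\in\mathcal{A}_Z$ we have $g^{-1}(A)=\phi_X^{-1}(A)\sqcup\phi_Y^{-1}(A)$, and by measurability of $\phi_X,\phi_Y$ each summand lies in the corresponding $\sigma$-algebra, hence $g^{-1}(A)\in\mathcal{A}_X\sqcup\mathcal{A}_Y$ by construction of the coproduct $\sigma$-algebra. For nonsingularity, the same decomposition shows that for $N\in\mathcal{N}_Z$, $g^{-1}(N)=\phi_X^{-1}(N)\sqcup\phi_Y^{-1}(N)$, and nonsingularity of $\phi_X$ and $\phi_Y$ gives $\phi_X^{-1}(N)\in\mathcal{N}_X$ and $\phi_Y^{-1}(N)\in\mathcal{N}_Y$, so the union lies in $\mathcal{N}_X\sqcup\mathcal{N}_Y$ by Definition~\ref{markerDefinitionCoproduct}.

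The potential obstacle is essentially bookkeeping about generating sets: one must be sure that checking $g^{-1}$ on generators (sets of the form $A_1\sqcup A_2$ for measurability, and $N_1\sqcup N_2$ for nonsingularity) suffices, which is a standard monotone class / $\sigma$-algebra generation argument. Once this is dispatched, the morphism $\gamma\in Hom(M_X\sqcup M_Y,M)$ induced by $g$ satisfies $\gamma\circ\iota_X=\phi_X$ and $\gamma\circ\iota_Y=\phi_Y$ by construction. No other genuinely hard step arises; the whole result is the dual of Theorem~\ref{markerTheoremDirectProd} and reduces to functoriality of the underlying Set coproduct combined with the explicit description of the coproduct structures on the $\sigma$-algebra and on the null ideal given in Definition~\ref{markerDefinitionCoproduct} and Proposition~\ref{markerPropositionCanonicalInjection}.
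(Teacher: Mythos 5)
Your proposal is correct and follows essentially the same route as the paper's proof: define $g$ piecewise via $\phi_X$ and $\phi_Y$, then check measurability and nonsingularity componentwise. You actually supply details the paper leaves implicit --- the decomposition $g^{-1}(A)=\phi_X^{-1}(A)\sqcup\phi_Y^{-1}(A)$ and the uniqueness argument --- and your worry about a monotone-class step is unneeded here, since the coproduct $\sigma$-algebra consists exactly of sets of the form $A_1\sqcup A_2$, so the check on preimages is direct.
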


\noindent In conclusion, the direct sum operation has natural category
theoretic properties. He,Hence, the additivity property Eq.( \ref{LL7.10}) of
the functional $Z_{p}(M_{X},\mu_{X},\rho_{X})$ is not an artificial construction.

\subsection{The monoidal categories associated to product and coproduct
\label{markerSubsectMonoidalCategory}}

We recall the following

\begin{proposition}
\label{markerTheorDirProdMonoidal}\cite{MacLaneCategory} Let $\mathcal{C}$ be
a category such that for all objects $A,B\in Ob(\mathcal{C)}$ exists their
direct product $A\otimes B$, having the universal property. Then, there exists
a \emph{covariant functor} $F$ from the product category to $\mathcal{C}$,
$\mathcal{C\times}$ $\mathcal{C\rightarrow}$ $\mathcal{C}$ ,defined as
follows. For the object $(A,B)$ of $\mathcal{C\times}\mathcal{C}$, where $A,B$
are objects of $\mathcal{C}$, we have
\[
F((A,B)):=A\otimes B
\]
\noindent For the pair of morphisms $(u,v)\in Hom((A,B),\ (A^{\prime
},B^{\prime}))$ with $u\in Hom(A,A^{\prime})$, $v\in Hom(B,B^{\prime})$, from
the category $\mathcal{C}\times\mathcal{C}$ there exists an unique morphism
$w$ in the category $\mathcal{C}$, $w\in Hom(A\otimes B,A^{\prime}\otimes
B^{\prime})$ uniquely fixed by the conditions
\begin{align*}
w  &  =F((u,v))\\
p_{A^{\prime}}\circ w  &  =u\circ p_{A}\\
p_{B^{\prime}}\circ w  &  =v\circ p_{A}%
\end{align*}
\noindent We denoted with $p_{A}$, $p_{B}$ the projections from $Hom(A\otimes
B,A)$, $Hom(A\otimes B,B)$, and $p_{A^{\prime}}$ are $p_{B^{\prime}}$ the
projections from $Hom(A^{\prime}\otimes B^{\prime},A^{\prime})$,
$Hom(A^{\prime}\otimes B^{\prime},B^{\prime})$. The map $(u,v)\rightarrow
F((u,v))$ has the functorial property.

\noindent Let $(u,v)\in Hom((A,B),(A^{\prime},B^{\prime}))$ and $(u^{\prime
},v^{\prime})\in Hom((A^{\prime},B^{\prime}),(A",B"))$. Then,
\[
F((u^{\prime}\circ u,v^{\prime}\circ v~\ ))=F((u^{\prime},v^{\prime}~\ ))\circ
F((u,v~\ ))\in Hom(A\otimes B,A"\otimes B")
\]
\noindent If in the category $\mathcal{C}$ we have an unit object, then
$\mathcal{C}$ is a monoidal category.
\end{proposition}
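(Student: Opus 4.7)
The plan is to construct $F$ in two stages and then harvest the monoidal structure from the universal property. On objects I simply set $F((A,B)) := A \otimes B$. The work is on morphisms: given $(u,v) \in \mathrm{Hom}((A,B),(A',B'))$, I form the two morphisms $u \circ p_A : A \otimes B \to A'$ and $v \circ p_B : A \otimes B \to B'$ and apply the universal property of $A' \otimes B'$ to the cone with apex $A \otimes B$ given by these morphisms. This yields the unique $w \in \mathrm{Hom}(A \otimes B, A' \otimes B')$ with $p_{A'} \circ w = u \circ p_A$ and $p_{B'} \circ w = v \circ p_B$, and I define $F((u,v)) := w$.

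Next I verify the two functoriality axioms, both of which rest on the uniqueness clause in the universal property. For identities, the pair $(\mathrm{id}_A,\mathrm{id}_B)$ produces cone equations that are trivially satisfied by $\mathrm{id}_{A \otimes B}$, so uniqueness forces $F((\mathrm{id}_A,\mathrm{id}_B)) = \mathrm{id}_{A \otimes B}$. For composition, given $(u,v)$ and $(u',v')$ composable in $\mathcal{C} \times \mathcal{C}$, I check that both candidates $F((u' \circ u, v' \circ v))$ and $F((u',v')) \circ F((u,v))$ satisfy $p_{A''} \circ (-) = (u' \circ u) \circ p_A$ and $p_{B''} \circ (-) = (v' \circ v) \circ p_B$: the first by construction, the second by a short diagram chase using the defining equations for $F((u,v))$ and $F((u',v'))$ in turn. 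Uniqueness then collapses the two candidates into one.

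Finally, assuming a unit object $I \in Ob(\mathcal{C})$, the monoidal structure is harvested in the same spirit. The associator $\alpha_{A,B,C} : (A \otimes B) \otimes C \to A \otimes (B \otimes C)$ is constructed by giving the three projections it should satisfy and invoking the universal property of iterated products; its inverse is built symmetrically, and naturality of $\alpha$ follows again from uniqueness applied to the natural squares. The unitors $\lambda_A : I \otimes A \to A$ and $\rho_A : A \otimes I \to A$ are obtained as the canonical projections $p_A$ in the respective products. The genuinely technical step, which I would only sketch in the paper, is the pentagon and triangle coherence: for categorical products both coherence diagrams reduce, after composing with the appropriate iterated projections, to tautologies between morphisms into the same factor, so the uniqueness half of the universal property of the triple (resp. quadruple) product closes them. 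This coherence check is where I expect the main obstacle to lie, though it is entirely formal and contains no content beyond the universal property that Theorem~\ref{markerTheoremDirectProd} has already supplied for $MES$.
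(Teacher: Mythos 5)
The paper contains no proof of this proposition at all: it is recalled from Mac Lane with the citation \cite{MacLaneCategory} standing in for the argument, so there is no internal proof to compare against. Your construction is the standard one and is correct in its essentials: you define $F((u,v))$ as the unique morphism given by the universal property applied to the cone $u\circ p_{A}\colon A\otimes B\rightarrow A^{\prime}$, $v\circ p_{B}\colon A\otimes B\rightarrow B^{\prime}$, and you obtain both functoriality axioms (identities and composition) from the uniqueness clause, which is exactly Mac Lane's argument. Note that in doing so you have silently corrected a typo in the statement as printed: the paper's condition $p_{B^{\prime}}\circ w=v\circ p_{A}$ is not even well-typed (since $v\in Hom(B,B^{\prime})$ while $p_{A}$ lands in $A$) and must read $p_{B^{\prime}}\circ w=v\circ p_{B}$, which is what you use. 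Two places where your sketch is thinner than a complete proof, though neither is a wrong turn: (i) for the unitors, the canonical projection $p_{A}\colon I\otimes A\rightarrow A$ is an isomorphism only because the unit object is terminal (which is what ``unit object'' must mean here, and what holds in $MES$ for the one-point space $M_{1}$); its inverse is the morphism induced by $\mathrm{id}_{A}$ together with the unique morphism $A\rightarrow I$, and this invertibility deserves a line. (ii) For the pentagon and triangle, your reduction to uniqueness is the right mechanism, but the statement doing the work should be made explicit: any two parallel morphisms into an iterated product that agree after post-composition with every projection onto a primitive factor are equal, and all coherence legs built from associators and unitors do so agree. With those two remarks supplied, your proof is complete and is, in substance, the proof the paper delegates to the literature.
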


\noindent Similarly, by duality arguments, we have the following result for
the direct sum (coproduct)

\begin{proposition}
\label{markerPropDirectSum}\cite{MacLaneCategory} Let $\mathcal{C}$ be a
category such that for all objects $A,B$ from $Ob($ $\mathcal{C)}$ exists
their direct sum $A\sqcup B$, having the universal property. Then, there
exists a \ \emph{covariant functor} $G$ from the product category
$\mathcal{C\times}\mathcal{C\rightarrow}\mathcal{C}$ defined as follows. For
the object $(A,B)$ of $\mathcal{C\times}$ $\mathcal{C}$, where $A,B$ are
objects of $\mathcal{C}$ we have
\[
G((A,B)):=A\sqcup B
\]
\noindent For the pair of morphisms $(u,v)\in Hom((A,B),(A^{\prime},B^{\prime
}))$, with $u\in Hom(A,A^{\prime})$ and $v\in Hom(B,B^{\prime})$, from the
category $\mathcal{C}\times\mathcal{C}$ there exists an unique morphism $w$ in
the category $\mathcal{C}$, $w\in Hom(A\sqcup B,A^{\prime}\sqcup B^{\prime})$
uniquely fixed by the conditions
\begin{align*}
w  &  =G((u,v))\\
w\circ i_{A}  &  =i_{A^{\prime}}\circ u\\
w\circ i_{B}  &  =i_{B^{\prime}}\circ v
\end{align*}
\noindent We denoted with $i_{A}$ , $i_{B}$ the canonical injections from
$Hom(A,A\sqcup B)$, $Hom(B,A\sqcup B)$, and with $i_{A^{\prime}}$ ,
$i_{B^{\prime}}$ the injections from $Hom(A^{\prime},A^{\prime}\sqcup
B^{\prime})$, $Hom(B^{\prime},A^{\prime}\sqcup B^{\prime})$. The association
$(u,v)\rightarrow G((u,v))$ has the functorial property. Let $(u,v)\in
Hom((A,B),(A^{\prime},B^{\prime}))$ and $(u^{\prime},v^{\prime})\in
Hom((A^{\prime},B^{\prime}),(A",B"))$ then,
\[
G((u^{\prime}\circ u,v^{\prime}\circ v~\ ))=G((u^{\prime},v^{\prime}~\ ))\circ
G((u,v~\ ))\in Hom(A\sqcup B,A"\sqcup B")
\]
If in the category $\mathcal{C}$ we have a null object then, $\mathcal{C}$ is
a monoidal category with respect to direct sum.
\end{proposition}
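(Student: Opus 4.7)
The proof is a standard exercise in using the universal property of the coproduct, applied four times. First I would \emph{construct} the map on morphisms: given $(u,v) \in Hom((A,B),(A',B'))$, consider the pair of arrows $i_{A'}\circ u : A \to A' \sqcup B'$ and $i_{B'}\circ v : B \to A' \sqcup B'$. This is precisely the data of a cocone with apex $A' \sqcup B'$ over the discrete diagram $\{A,B\}$, so the universal property of $A \sqcup B$ produces a \emph{unique} morphism $w : A \sqcup B \to A' \sqcup B'$ satisfying $w \circ i_A = i_{A'}\circ u$ and $w \circ i_B = i_{B'}\circ v$. I define $G((u,v)) := w$.

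Second, I would verify functoriality. Identity preservation, $G((\mathrm{id}_A,\mathrm{id}_B)) = \mathrm{id}_{A \sqcup B}$, follows because both sides satisfy the cocone equations $\mathrm{id}_{A \sqcup B} \circ i_A = i_A = i_A \circ \mathrm{id}_A$ (and likewise for $B$), so by the uniqueness clause they must coincide. For composition, given $(u,v) \in Hom((A,B),(A',B'))$ and $(u',v') \in Hom((A',B'),(A'',B''))$, both $G((u'\circ u, v'\circ v))$ and $G((u',v')) \circ G((u,v))$ make the diagrams involving $i_A, i_B, i_{A''}, i_{B''}$ commute; the first by definition, the second by a short chase using the defining relations of $G((u,v))$ and $G((u',v'))$. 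Uniqueness in the universal property of $A \sqcup B$ then forces equality, which establishes the functorial identity stated in the proposition.

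Third, for the monoidal structure I would use the null object $0$ as the tensor unit. The associativity isomorphism $\alpha_{A,B,C} : (A \sqcup B) \sqcup C \xrightarrow{\sim} A \sqcup (B \sqcup C)$ is the unique morphism making the obvious injection diagrams commute, and is invertible by the symmetric construction in the other direction (uniqueness then yields that both composites are identities). The left and right unitors $\lambda_A : 0 \sqcup A \xrightarrow{\sim} A$ and $\rho_A : A \sqcup 0 \xrightarrow{\sim} A$ are induced by $(0_{0,A}, \mathrm{id}_A)$ and $(\mathrm{id}_A, 0_{0,A})$ respectively, where $0_{0,A}$ is the unique morphism out of the initial object; their invertibility uses the fact that a null object is in particular initial. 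The pentagon and triangle coherence axioms both reduce, after precomposing with the relevant canonical injections, to trivial identities, and are then forced to hold on the whole coproduct by the uniqueness clause of the universal property.

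The proof is essentially mechanical once one commits to the discipline of always verifying equations after composing with $i_A$ and $i_B$ and then appealing to uniqueness. The main obstacle I anticipate is not any single step but the notational bookkeeping for the coherence diagrams, particularly the pentagon; however, since the statement merely cites \cite{MacLaneCategory} and no novelty is claimed beyond instantiation to $MES$ (for which the existence of a null object, e.g.\ the empty measurable space, must be added when the category is extended as in Subsection~\ref{markerSubsectMonoidalCategory}), it is acceptable to carry out the verifications schematically and refer the reader to the dual of the argument already spelled out in Proposition~\ref{markerTheorDirProdMonoidal}.
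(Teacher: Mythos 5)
Your proof is correct and is the standard universal-property argument: define $G$ on morphisms via the couniversal property of $A\sqcup B$ applied to the cocone $(i_{A'}\circ u,\ i_{B'}\circ v)$, force identity-preservation and compositionality by the uniqueness clause, and obtain the associator, unitors (using that the null object is initial) and the coherence axioms the same way. Note, however, that the paper contains no proof of this proposition at all --- it is recalled from \cite{MacLaneCategory} and justified only by ``duality arguments'' from Proposition~\ref{markerTheorDirProdMonoidal}, which is likewise stated without proof --- so your write-up supplies exactly what the citation delegates to the literature; the only inaccuracy is your closing reference to an argument ``already spelled out'' in Proposition~\ref{markerTheorDirProdMonoidal}, since nothing is spelled out there either.
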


\noindent We emphasize that, despite the fact that the construction of the
direct sum is dual to the direct product, from the previous proposition
(\ref{markerPropDirectSum}) the functor $G$ is a covariant functor. In the
category $MES$ we have an unit object as well as the null object. The unit
object is denoted with $M_{1}:=(1,\mathcal{A}_{1},\mathcal{N}_{1})$, where $1$
is the one point set \cite{MacLaneCategory}, $\mathcal{A}_{1}$ is the trivial
$\sigma$-algebra consisting in the one point set $1$, $\varnothing$, and
$\mathcal{N}_{1}=\{\varnothing\}$, respectively. The (more or less formal)
null object $M_{0}$, with respect to the direct sum, is the object generated
by the empty set $M_{0}:=(\varnothing,\mathcal{A}_{\varnothing},\ \mathcal{N}%
_{\varnothing})$. So we have the following

\begin{conclusion}
The category $MES$ is a monoidal category both with respect to the product
$\otimes$ and the coproduct $\sqcup$.
\end{conclusion}

\section{ Axioms \label{markerSectionAxioms}}

We expose another approach, based on category theory, to the problem of the
naturalness of the choice of the family of functions $g_{\alpha}$ used in the
definition of the entropy \cite{Renyi1}. We prove that this problem may be
treated if we take into account the additivity and the multiplicative
properties of the functional $Z_{p}^{{}}$. We mention that a possible
candidate for the generalization of the symmetry \textit{Postulate} 1
\cite{Renyi1} is the requirement of invariance of the generalized entropy
under measure preserving transformations. Recall that the group generated by
finite permutations is the maximal \emph{measure preserving group} with
respect to the counting measure. The problem is that there are plenty of
measures such that the measure preserving group is trivial (for instance, the
atomic measure for 2 element set with $\mu(1)\neq\mu(2)$). To avoid this
problem, we observe that \emph{Postulate 1} and \textit{Postulate} 5' in the
original R{\'{e}}nyi theory \cite{Renyi1} can be generalized as follows. For a
given measurable function $f(x)$ on the measured space $M_{X}$ and $\mu\in
C(M_{X})$, let us define
\begin{equation}
m_{f}(M_{X},\mu,t)=\mu\left[  \left(  x|x\in X~\&~f(x)\leq t\right)  \right]
\label{mm1}%
\end{equation}
\noindent Note that $m_{f}(M_{X},\mu,t)$ is invariant under measure preserving
transformations. In addition
\begin{equation}
Z_{p}[M_{X},\mu_{X},\rho]~=\int\limits_{0}^{\infty}t^{p}dm_{\rho}(M_{X}
,\mu,t) \label{mm2}%
\end{equation}
\noindent Then, the \textit{Postulate} 1 (the symmetry property) and
\textit{Postulate} 5' (the additivity property expressed in Proposition
\ref{markerPropositionAdditivityZp}) can be generalized as follows.
\textit{Postulate} 1 \& \textit{Postulate} 5'
\begin{align}
Z_{p}[M_{X},\mu_{X},\rho_{X}]  &  =\int\limits_{0}^{\infty}h_{X}%
(t)~dm_{\rho_{X}}(M_{X},\mu,t)={\int\limits_{X}h_{X}\ }\left[  \rho
_{X}(x)\right]  d\mu_{X}(x\mathbf{)}\label{mm3}\\
h_{X}(x)  &  >0;~if~~x>0 \label{mm3.1}%
\end{align}
\noindent for some Borel measurable function $h_{X}(t)$ with
\begin{equation}
h_{X}(0)=0 \label{mm3.5}%
\end{equation}
\noindent The last requirement result by considering the case when the support
of $\rho_{X}$ is concentrated on a proper subset of $X$ and by using
Eq.(\ref{LL7.10}) . The generalization of the \textit{Postulate} 2 (the
continuity property) is straightforward. Be $h(x)$ continuos and $\rho_{X}\in
L^{1}(M_{X},\mu_{X})$, we get
\begin{equation}
{h}_{X}\left[  \rho(x)\right]  \in L^{1}(M_{X},\mu_{X}) \label{mm4}%
\end{equation}
\noindent In our settings, the analog of the \textit{Postulate} 4 (the
additivity property) \cite{Renyi1} is the multiplicative property given by
Eq.(\ref{LL7.5}) and Proposition \ref{markPropositionMultiplicative}. By using
Eqs.(\ref{LL7.5}, \ref{mm3}, \ref{mm3.5} and \ref{mm4}), and by continuity of
the functions $h_{XY}$, $h_{X}$, $h_{Y}$ for all $x,y\geq0$, we obtain the
following functional equation (valid almost everywhere)
\begin{equation}
h_{XY}(x~y)=h_{X}(x)h_{Y}(y);~x,y\in\mathbb{R} \label{mm5}%
\end{equation}
\noindent By arguments similar to the proof of the \textit{uniqueness}, from
Theorem 2 \cite{Renyi1}), we get Eq.~(\ref{mm2}) (for details see the
Appendix~ \ref{markerAppendixSubsectFunctionalEquaton}): there exists an
universal family of functions, independent of $X$ , parametrized by the
positive parameter $p$ such that
\begin{align}
h_{X}(x)  &  =x^{p}C_{X}\label{mm6}\\
h_{Y}(y)  &  =y^{p}C_{Y}\label{mm6.1}\\
h_{XY}(z)  &  =z^{p}C_{X}C_{Y} \label{mm6.2}%
\end{align}

\section{Summary and conclusions}

We proved that the most natural setting for treating the axiomatic approach to
the study of definitions of measures of information or uncertainty, is the
formalism of the category theory, that was invented for the most difficult,
apparently contradictory aspects of the foundation of mathematics. In this
respect we introduced a category of measurable spaces $MES$. We proved that in
the category $MES$ exist the direct product and the direct sum, having
universal properties. We proved that the functional $Z_{p}(M_{X},\mu_{X}
,{\rho}_{X})$ defined in Eq.(\ref{LL0.01}), which appears in the definition of
both R\'{e}nyi and Tsallis entropies, has algebraic compatibility properties
with respect to direct product and direct sum, as shown in Eqs~(\ref{LL7.5})
and (\ref{LL7.10} ).

\noindent The main conclusions may be summarized as follows.

\noindent\textbf{(1)} The natural measure of the quantity of information is
the family of functionals $Z_{p}(M_{X},\mu_{X},{\rho}_{X})$ given by
Eq.~(\ref{LL0.01}), (defined in the Fr\'{e}chet space for $0<p<1$, and in the
Banach space for $p>1$), and the classical Shannon entropy by Eq.~(\ref{LL2});

\noindent\textbf{(2)}The category $MES$ is the natural framework for treating
the problems related to the measure of the information, in particular in
reformulating the R\'{e}nyi axioms;

\noindent\textbf{(3)} The category $MES$ is a monoidal category with respect
to direct product and coproduct and the functional $Z_{p}(M_{X},\mu_{X},{\rho
}_{X}$ $)$ has natural \textit{compatibility properties} with respect to the
product (it is multiplicative) and the coproduct (it is additive);

\noindent\textbf{(4)} Up to a multiplicative constant, it is possible to
recover the exact form of the functional $Z_{p}(M_{X},\mu_{X},{\rho}_{X})$
defining the generalized entropies from a system of axioms that generalize the
ones adopted by R\'{e}nyi \cite{Renyi1}.

\begin{acknowledgement}
The authors are grateful to Prof. M. Van Schoor and Dr D. Van Eester from
Royal Military School, Brussels. Gy\"{o}rgy Steinbrecher is indepted to Prof.
C. P. Niculescu from Mathematics Department, University of Craiova, Romania,
and S. Barasch for discussions on category theory. Giorgio Sonnino is also
grateful to Prof. P. Nardone and Dr P. Peeters of the Universit\'{e} Libre de
Bruxelles (ULB) for useful discussions and suggestions.
\end{acknowledgement}

\section{Appendix}

\subsection{R\'{e}nyi divergence and entropy
\label{markerAppendxRenyDivergenceEntropy}}

Suppose to have a measurable space~$(\Omega,\mathcal{A},m)$ with a finite or
$\sigma$ -finite measure $\mu$ and a normalized PDF $\rho(x)$, i.e.
$\int\limits_{\Omega}\rho(x)d\mu(x)=1$. Only in this Subsection we adopt the
following definitions
\begin{align}
U(\rho,d\mu,\alpha)  &  :={\int\limits_{\Omega}}\left[  \rho(x)\right]
^{\alpha}d\mu(x)\label{app1}\\
S_{R,\alpha}(\rho,d\mu)  &  =\frac{1}{1-\alpha}\log U(\rho,d\mu,\alpha)
\label{app2}%
\end{align}
\noindent Consider now a measurable space $N=(\Omega,\mathcal{A},~n)$ with
$\sigma$-finite measure $n$. We also denote with $P(x),$ $Q(x)$ two
probability densities, satisfying the condition
\begin{equation}
{\int\limits_{\Omega}}P(x)dn(x)={\int\limits_{\Omega}}Q(x)dn(x)=1 \label{app3}%
\end{equation}
\noindent The R\'{e}nyi divergence reads
\begin{equation}
D_{p}(P||Q)=\frac{1}{p-1}\log{\int\limits_{\Omega}}P^{p}Q^{1-p}
dn(x)\ \label{app4}%
\end{equation}
\noindent According to the notation Eq.(\ref{app1}) and normalization
Eq.(\ref{app3}), we get
\begin{equation}
{\int\limits_{\Omega}}Q^{1-p}dn(x)=U(Q,dn,1-p) \label{app5}%
\end{equation}
and $P_{1}(x):=P(x)/U(Q,dn,1-p)$ is normalized with respect to the measure
$Q^{1-p}dn(\mathbf{x})$. Consequently, from Eq.(\ref{app1}), we find
\begin{align}
{\int\limits_{\Omega}}P^{p}Q^{1-p}dn(x)\  &  =U\left[  \frac{P}{Z}
,Q^{1-p}dn,p\right]  Z^{p}\label{app6}\\
Z  &  :=U(Q,dn,1-p) \label{app7}%
\end{align}
\noindent By using the notation Eqs(\ref{app2}, \ref{app4}-\ref{app7}) we
obtain the following relation between the R\'{e}nyi entropies and the
divergences
\begin{equation}
D_{p}(P||Q)=-S_{R,p}\left[  \frac{P}{Z},Q^{1-p}dn\right]  +\frac{p^{2}}
{p-1}S_{R,1-p}\left[  Q,dn\right]  \label{app8}%
\end{equation}

\subsection{ Solution of the functional equation Eq.(\ref{mm5}
)\label{markerAppendixSubsectFunctionalEquaton}}

Using Eq.~(\ref{mm3.1}) with $\rho\geq0$, we note that we can use the double
logarithmic scale by performing the following change of variables
\begin{align}
f_{X}(u)  &  =\log h_{X}(\exp(u))\ \label{app8.1}\\
f_{Y}(v)  &  =\log h_{Y}(\exp(v))\ \label{app8.2}\\
f_{XY}(z)  &  =\log h_{XY}(\exp(z))\ \label{app8.3}%
\end{align}
\noindent Hence, Eq.~(\ref{mm5}) reads
\begin{equation}
f_{XY}(u+v)=f_{X}(u)+f_{Y}(v) \label{app9}%
\end{equation}
\noindent From Eq.~(\ref{app9}), we obtain
\begin{align}
f_{XY}(u+v)-f_{XY}(v)  &  =f_{Y}(v)-f_{Y}(0)\\
\lbrack f_{XY}(u+v)-f_{XY}(v)]-[f_{XY}(u+0)-f_{XY}(0)]  &  =0\\
f_{XY}(u+v)-f_{XY}(v)-f_{XY}(u)  &  =f_{XY}(0) \label{101}%
\end{align}
\noindent The Eq.(\ref{app13}) admits the particular constant solution
\begin{equation}
F_{part}(z)\equiv f_{XY}(0) \label{app13}%
\end{equation}
\noindent The solution of corresponding homogenous equation
\begin{equation}
f_{XY}(u+v)-f_{XY}(v)-f_{XY}(u)=0 \label{app14}%
\end{equation}
\noindent may be found by using again the continuity of the function
$h_{XY}(\rho)$ (See also \cite{DunfordSchw} I.3.1, page 8), i.e.,
\begin{equation}
f_{XY}(z)=pz\text{ } \label{app15}%
\end{equation}
\noindent The general solution reads
\[
f_{XY}(z)=f_{XY}(0)+pz
\]
\noindent By using Eq.~(\ref{app9}), we get the universal linear slope $p$
\begin{align*}
f_{X}(u)  &  =f_{X}(0)+pu\\
f_{Y}(v)  &  =f_{Y}(0)+pv\\
f_{XY}(0)  &  =f_{X}(0)+f_{Y}(0)
\end{align*}
\noindent and, by Eqs~(\ref{app8.1}-\ref{app8.3}), up to undetermined
multiplicative constants $C_{X}\!=\!\exp(f_{X}(0))$, $C_{Y}\!=\!\exp
(f_{Y}(0))$, we find Eqs~(\ref{mm6}-\ref{mm6.2}).

\end{document}